\documentclass[sigconf,balance=false]{acmart}
\AtBeginDocument{%
  }

\usepackage{algorithm}
\usepackage{algpseudocode}
\usepackage{amsmath}
\algrenewcommand\algorithmicrequire{\textbf{Input:}}
\algrenewcommand\algorithmicensure{\textbf{Output:}}
\newcommand{\qvec}{\mathbf{q}}
\newcommand{\dvec}{\mathbf{d}}
\usepackage{array}
\usepackage{cleveref}
\theoremstyle{acmplain}
\newtheorem{proposition}{Proposition}

\setcopyright{acmlicensed}
\copyrightyear{2027}
\acmYear{2027}
\acmDOI{XXXXXXX.XXXXXXX}
\acmConference[xxxx'27]{Make sure to enter the correct
  conference title from your rights confirmation email}{February xx--xx,
  2027}{New York}

\renewcommand\footnotetextcopyrightpermission[1]{}

\begin{document}

%%
%% The "title" command has an optional parameter,
%% allowing the author to define a "short title" to be used in page headers.
\title{AdaWidth: Query-Adaptive Embedding Width for Dense Retrieval}

%%
%% The "author" command and its associated commands are used to define
%% the authors and their affiliations.
%% Of note is the shared affiliation of the first two authors, and the
%% "authornote" and "authornotemark" commands
%% used to denote shared contribution to the research.
\author{Shubing Yang}
\email{sueyoung@uw.edu}
\affiliation{%
  \institution{University of Washington}
  \city{Tacoma}
  \state{Washington}
  \country{USA}
}

\author{Dongfang Zhao}
\email{dzhao@cs.washington.edu}
\affiliation{%
  \institution{University of Washington}
  \city{Tacoma}
  \state{Washington}
  \country{USA}
}

%%
%% By default, the full list of authors will be used in the page
%% headers. Often, this list is too long, and will overlap
%% other information printed in the page headers. This command allows
%% the author to define a more concise list
%% of authors' names for this purpose.
\renewcommand{\shortauthors}{Yang et al.}

%%
%% The abstract is a short summary of the work to be presented in the
%% article.
\begin{abstract}
High-dimensional embeddings are central to dense retrieval, but not all of these dimensions need to be evaluated at retrieval time. Existing methods reduce dimensions in two ways: truncating the same leading dimensions for every query, or masking a different subset for each query while still storing and accessing the full embedding. Yet queries within a single task differ widely in the number of dimensions they need for their rankings to stabilize. 

We introduce AdaWidth, which adapts the number of evaluated dimensions to each query within a shared prefix representation. An orthogonal prefix adapter applies a single learned rotation to queries and documents alike, concentrating discriminative signal in leading coordinates while leaving every full width inner product unchanged. A lightweight router then reads order statistics off the ranking a query has already produced, and evaluates more dimensions only for the queries whose top results would change. We further derive a prefix sufficiency analysis showing that the required number of dimensions is set by the competing documents at the retrieval cutoff: it grows logarithmically with corpus size, decreases logarithmically with retrieval depth, and remains heavy-tailed across queries. Across six retrieval tasks and five frozen encoders, AdaWidth matches the NDCG@10 of state-of-the-art dimensionality reduction using 55\% to 84\% fewer dimensions per query.
\end{abstract}

%%
%% The code below is generated by the tool at http://dl.acm.org/ccs.cfm.
%% Please copy and paste the code instead of the example below.
%%
\begin{CCSXML}
<ccs2012>
 <concept>
  <concept_id>00000000.0000000.0000000</concept_id>
  <concept_desc>Do Not Use This Code, Generate the Correct Terms for Your Paper</concept_desc>
  <concept_significance>500</concept_significance>
 </concept>
 <concept>
  <concept_id>00000000.00000000.00000000</concept_id>
  <concept_desc>Do Not Use This Code, Generate the Correct Terms for Your Paper</concept_desc>
  <concept_significance>300</concept_significance>
 </concept>
 <concept>
  <concept_id>00000000.00000000.00000000</concept_id>
  <concept_desc>Do Not Use This Code, Generate the Correct Terms for Your Paper</concept_desc>
  <concept_significance>100</concept_significance>
 </concept>
 <concept>
  <concept_id>00000000.00000000.00000000</concept_id>
  <concept_desc>Do Not Use This Code, Generate the Correct Terms for Your Paper</concept_desc>
  <concept_significance>100</concept_significance>
 </concept>
</ccs2012>
\end{CCSXML}

% \ccsdesc[500]{Information retrieval}
% \ccsdesc[300]{Do Not Use This Code~Generate the Correct Terms for Your Paper}
% \ccsdesc{Do Not Use This Code~Generate the Correct Terms for Your Paper}
% \ccsdesc[100]{Do Not Use This Code~Generate the Correct Terms for Your Paper}

% \keywords{Do, Not, Use, This, Code, Put, the, Correct, Terms, for,
%   Your, Paper}

\settopmatter{printacmref=false,printccs=true,printfolios=true}
%% This command processes the author and affiliation and title
%% information and builds the first part of the formatted document.

\maketitle

\renewcommand\thefootnote{}\footnote{\noindent
The source code is available at \url{https://github.com/suey3141/AdaWidth.git}.
}\addtocounter{footnote}{-1}

\section{Introduction}

\begin{figure}[t]
  \centering
  \includegraphics[width=0.45\textwidth]{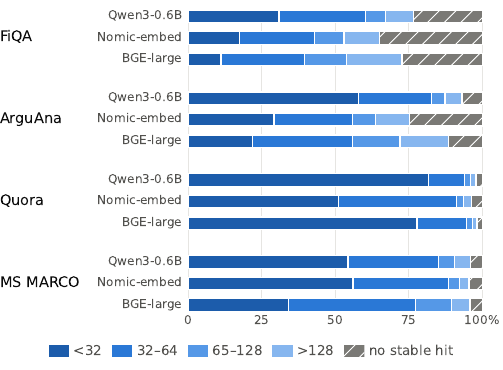}
  \caption{Distribution of stable prefix widths across four retrieval tasks and three encoders. Each bar groups queries by the smallest prefix width at which their ranking stabilizes; no stable hit indicates no stabilization within the widest prefix evaluated.}
  \label{fig:stable-width}
\end{figure}

High-dimensional embeddings are central to dense retrieval. They are the retrieval
representation in web and product
search~\cite{huang2020embedding, nigam2019semantic, li2021embedding}, open-domain question answering~\cite{karpukhin2020dense}, recommendation~\cite{covington2016deep}, and retrieval-augmented generation~\cite{lewis2020retrieval, gao2023survey}. In these systems, large corpora are searched in real time~\cite{malkov2018efficient, johnson2019billion, guo2020accelerating, yang2026}, and at that scale embedding width stops being a modelling detail and becomes a running cost. Not all of those dimensions, however, need to be evaluated at retrieval time~\cite{ethayarajh2019contextual, timkey2021rogue, mu2018allbutthetop,
wang2023dimensionality, kataiwa2025intrinsic, raunak2019effective, takeshita2025randomly,
inkiriwang2025dimensions, zhang2024evaluating}.

Existing work reduces dimensions in two ways. The first builds a representation that stays usable when truncated, so that a prefix of the embedding can be scored on its own~\cite{kusupati2022matryoshka, yoon2024search, yoon2024matryoshka, zhang2025smec, zhao2026dive, li20242d, zhuang2024starbucks}. The number of dimensions is then a property of the representation, fixed for the workload or for a target compression level, and every query is served at the same one. The second line estimates, for an individual query, which coordinates carry signal, and keeps those with the highest scores~\cite{faggioli2024dime, derasmo2024eclipse, faggioli2025codime, derasmo2026rdime, wu2026learning}. These methods do adapt to the query, but the coordinates they retain differ from query to query and are scattered across the vector: the index must hold every document at full width, and scoring gathers those coordinates out of a full width record.

However, queries differ substantially in the number of dimensions needed for their rankings to stabilize. Figure~\ref{fig:stable-width} illustrates this variation across retrieval tasks and encoders, including encoders trained with Matryoshka objectives~\cite{kusupati2022matryoshka, nussbaum2024nomic, qwen3technicalreport}. Some queries reach stable rankings with short prefixes, whereas others require much wider representations. A single width must therefore accommodate the most demanding queries, causing queries that stabilize earlier to be scored over unnecessary coordinates.

We introduce AdaWidth, a retrieval method that adapts the evaluated prefix width to each query within a shared representation. AdaWidth keeps the encoder frozen and adds two components between the encoder and the index: an orthogonal prefix adapter that makes leading coordinates more informative, and a lightweight router that determines how many of those coordinates to evaluate for each query. First, the adapter applies the same learned rotation to queries and documents, implemented as a product of Householder reflections~\cite{mhammedi2017efficient}. This rotation concentrates discriminative information in leading coordinates without changing similarities at the original width, enabling retrieval with short contiguous prefixes. Second, the router extracts features based on order statistics from the ranking produced at a fixed initial width and evaluates wider prefixes only for queries whose top results appear unstable. The adapter therefore makes short prefixes more effective, and the router converts this prefix structure into computational savings for individual queries.

We further derive a prefix sufficiency analysis of how many dimensions a query requires. A relevant document falls out of the top-$k$ at a given width exactly when $k$ non-relevant documents overtake it there. The required number of dimensions is therefore determined by an order statistic of the competition at the retrieval cutoff, not by the reconstruction accuracy of any individual similarity. This criterion differs from the reconstruction and distance preservation objectives underlying classical reduction~\cite{dasgupta2003elementary}. Under an exponential decay of the per-competitor overtaking probability, the required number grows logarithmically with corpus size, decreases logarithmically with retrieval depth, and remains heavy-tailed across queries, which is the pattern Figure~\ref{fig:stable-width} exhibits. The analysis also bounds what any orthogonal adapter can contribute: since such a map fixes every full width inner product, it can only move discriminative signal between the evaluated prefix and the discarded tail, never create it. Training the adapter is therefore energy compaction of a fixed quantity, and its effect enters the analysis as a higher rate at which discriminative information accumulates in the prefix.

We evaluate on four text retrieval tasks from BEIR~\cite{thakur2021beir}, including FiQA~\cite{maia201818}, ArguAna~\cite{wachsmuth2018retrieval}, Quora~\cite{thakur2021beir}, and MS MARCO~\cite{nguyen2016ms}, and two knowledge VQA retrieval tasks, OK-VQA~\cite{marino2019ok} and A-OKVQA~\cite{schwenk2022okvqa}. We use five frozen encoders with 137M to 7B parameters and embedding dimensions from 768 to 4096~\cite{nussbaum2024nomic, qwen3technicalreport, bge_embedding, e5Mistral}. We compare against prefix truncation, Matryoshka Adaptor~\cite{yoon2024matryoshka}, SMEC~\cite{zhang2025smec} and Learning-to-Select~\cite{wu2026learning},
each retrained on every encoder and dataset. The predicted scaling form holds on all six tasks with the predicted signs on corpus size and retrieval depth. On cost against quality, AdaWidth is ahead of all three baselines at nearly every shared operating point, and matches their NDCG@10 using 55\% to 84\% fewer dimensions per query.

This paper makes the following contributions:
\begin{itemize}
    \item We propose AdaWidth, a retrieval method that adapts the evaluated prefix width to each query within a shared representation. An orthogonal prefix adapter concentrates discriminative signal in leading coordinates. A lightweight router uses features based on order statistics from the initial ranking to evaluate wider prefixes only for queries whose top results appear unstable. (\Cref{sec:AdaWidth})
    \item We derive a prefix sufficiency analysis that characterizes the width a query requires as an order statistic of the competition at the retrieval cutoff, and predicts that it grows logarithmically with corpus size and decreases logarithmically with retrieval depth. (\Cref{sec:prefix_sufficiency_bound})
    \item We evaluate AdaWidth on six retrieval tasks with five frozen text and vision language encoders, showing that AdaWidth matches the NDCG@10 of state-of-the-art dimensionality reduction using $55\%$--$84\%$ fewer dimensions per query. (\Cref{sec:eval})
\end{itemize}

\section{Related Work}
\label{sec:RelatedWork}

\subsection{Embedding Width and Prefix Structured Representations}

Dense retrieval maps a query $q$ and each document $d$ to vectors in $\mathbb{R}^{D}$ and ranks documents by inner product or cosine similarity \cite{karpukhin2020dense}. For a candidate set $\mathcal{C}(q)$, score computation requires $\Theta(|\mathcal{C}(q)|D)$ arithmetic operations. Scoring only $m<D$ coordinates reduces this cost to $\Theta(|\mathcal{C}(q)|m)$, but may change the ordering around the Top~$k$ cutoff. Classical dimensionality reduction constructs one global lower dimensional space. The Johnson--Lindenstrauss lemma, for example, relates the dimensionality needed to preserve pairwise distances to the number of points \cite{dasgupta2003elementary}. Learned and post-hoc methods similarly construct a fixed representation for a chosen target width. ConAE learns a low dimensional bottleneck for dense retrieval \cite{liu2022dimension}, while PCA and related projections can reduce pretrained sentence embeddings without retraining \cite{zhang2024evaluating}. Empirical studies also show that the useful dimensionality is often below the native embedding dimension and that moderate coordinate removal causes limited retrieval loss \cite{wang2023dimensionality,takeshita2025randomly}. More recent work finds that compression behavior depends on corpus complexity \cite{caspari2026corect}, while Spectral Tempering adapts a global spectral transformation to each target width \cite{li2026spectral}. These methods do not characterize the width required for an individual query or the order statistic that determines entry into the Top~$k$ results.

Matryoshka Representation Learning organizes information at multiple granularities within one embedding \cite{kusupati2022matryoshka}. Given prescribed widths $\mathcal{M}=\{m_1,\ldots,m_L\}$, it jointly optimizes the leading $m$ coordinates for every $m\in\mathcal{M}$, producing nested representations that share one coordinate order. Search Adaptor learns a global transformation of frozen embeddings for retrieval at a target width \cite{yoon2024search}, while Matryoshka Adaptor extends this idea to several nested output widths through supervised or unsupervised tuning \cite{yoon2024matryoshka}. SMEC improves the optimization of nested representations through sequential compression \cite{zhang2025smec}, and DIVE focuses on stable supervised compression when relevance labels are limited \cite{zhao2026dive}. AdANNS applies Matryoshka representations of different capacities to different stages of an approximate nearest neighbor pipeline \cite{rege2023adanns}. These methods make smaller representations usable, but their widths are selected for the workload, target compression level, or retrieval stage rather than inferred from the observed ranking of each query.

\begin{figure*}[t]
  \centering
  \includegraphics[width=\textwidth]{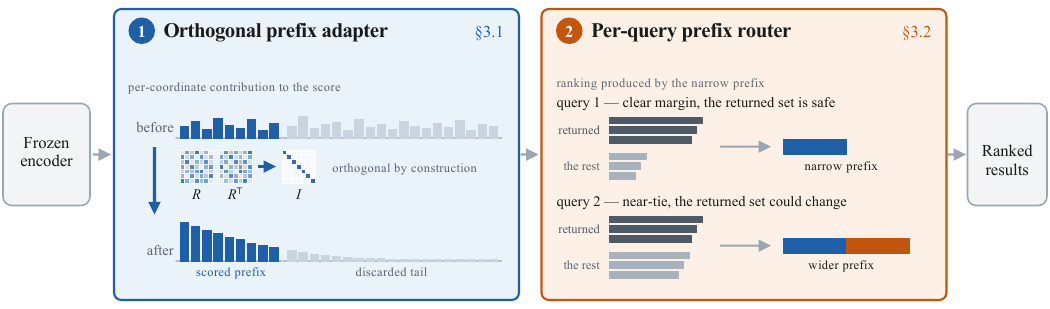}
  \caption{Overview of AdaWidth. \textbf{(1)} The orthogonal prefix
adapter rotates queries and documents by a single learned $R$; since
$\langle Rq, Rd\rangle = \langle q, d\rangle$, training can only move
per coordinate contribution out of the discarded tail and into the scored
prefix. \textbf{(2)} The prefix router reads each query's stage-1 ranking at the narrow prefix $\ell$ and rescores at the wider prefix $h$ only for the fraction $\rho$ of queries whose returned documents are not clearly separated from the rest.}
  \label{fig:overview}
\end{figure*}

\subsection{Query Specific Coordinate Selection}

A separate line of work estimates which coordinates are useful for an individual query. DIME constructs a query dependent importance vector, typically from the elementwise interaction between the query and a pseudo relevant representation, and retains the coordinates with the highest scores \cite{faggioli2024dime}. ECLIPSE incorporates pseudo irrelevant feedback, while CoDIME estimates coordinate importance from counterfactual click signals \cite{derasmo2024eclipse,faggioli2025codime}. These approaches improve the estimation of coordinate importance, but the retained dimensionality is generally selected through a global search.

RDIME replaces this search with a statistical risk criterion that can determine a different coordinate set for each query \cite{derasmo2026rdime}, while Learning to Select distills supervised coordinate importance into a predictor that operates directly on the query embedding \cite{wu2026learning}. These methods produce query specific and generally noncontiguous coordinate subsets. Although RDIME also allows the number of selected coordinates to vary, its criterion does not relate that number to corpus size or retrieval depth. AdaWidth instead learns one prefix order shared by queries and documents and selects only how much of that prefix to evaluate. Its orthogonal adapter preserves every full width inner product, while its router allocates width from the ranking already computed for each query. The prefix sufficiency analysis further connects this width to corpus size, retrieval depth, and the rate at which discriminative information accumulates in the prefix.

\subsection{Query Adaptive Retrieval Computation}

Query performance prediction estimates retrieval difficulty from the query or its resulting score distribution \cite{cronentownsend2002predicting,cummins2014document,faggioli2023query}. Cascade ranking uses early ranking evidence to decide which candidates receive more expensive processing \cite{wang2011cascade}, while related methods select retrieval configurations or evaluation depths separately for each query \cite{deveaud2019learning,ganguly2023query}. These methods adapt the ranker, candidate set, or retrieval depth. AdaWidth instead keeps the shortlist and scoring function fixed and determines whether the same candidates require a wider embedding prefix.

\section{AdaWidth}
\label{sec:AdaWidth}

AdaWidth leaves the encoder frozen and adds two components between it and the
index.  As shown in Figure~\ref{fig:overview}, an \emph{orthogonal prefix adapter}
(Section~\ref{sec:adapter}) applies a single learned rotation to queries and
documents alike, moving discriminative signal toward the leading coordinates
without altering any full-width inner product.  A \emph{prefix router}
(Section~\ref{sec:router}) then decides, per query, how many of those coordinates
to score against.  The prefix sufficiency analysis in Section~\ref{sec:prefix_sufficiency_bound}
provides a theoretical justification for both components.

\paragraph{Notation.}
Let \(E\) be a frozen encoder of width \(D\), and write
\(\qvec=E(q)\), \(\dvec=E(d)\in\mathbb{R}^{D}\) for unit-normalized embeddings.
Throughout the analysis and the evaluation we call \(D\) the \emph{encoder
width}, use \emph{width} for the number of evaluated coordinates, and reserve
\emph{coordinate} for an individual axis.
For a width \(m\le D\) let
\(\pi_m(\mathbf{x})=\mathbf{x}_{1:m}/\lVert\mathbf{x}_{1:m}\rVert_2\), and for an
orthogonal \(R\in\mathbb{R}^{D\times D}\) define the prefix score
\begin{equation}
    s^R_m(q,d)=\big\langle \pi_m(R\qvec),\,\pi_m(R\dvec)\big\rangle .
    \label{eq:prefix_score}
\end{equation}
Every method we compare is scored on a shared candidate shortlist
\(\mathcal{C}(q)\), \(\lvert\mathcal{C}(q)\rvert=500\), mined once from the
\emph{frozen} encoder at \(64\) coordinates, so that differences between methods
reflect re-ranking within a fixed pool.  The
retrieval metric is NDCG@10 with the ideal gain computed over all judged
relevant documents in the corpus.

\subsection{Orthogonal Prefix Adapter}
\label{sec:adapter}
AdaWidth learns its prefix order through an orthogonal map. Products of Householder reflections provide an exact parameterization of orthogonal matrices \cite{mhammedi2017efficient}, and the compact WY representation evaluates such products using dense matrix operations \cite{schreiber1989storage}. AdaWidth uses this parameterization to move discriminative signal into leading coordinates without changing any full width inner product.

\subsubsection{Parameterization}
We realize \(R^{\top}\) as a product of \(F\) Householder reflections,
\begin{equation}
    R^{\top}=H(\mathbf{v}_1)\,H(\mathbf{v}_2)\cdots H(\mathbf{v}_F),
    \qquad
    H(\mathbf{v})=I-2\mathbf{v}\mathbf{v}^\top,
    \label{eq:householder_product}
\end{equation}
with unit vectors \(\mathbf{v}_j\in\mathbb{R}^{D}\) obtained by normalizing free
parameters.  Orthogonality is therefore structural: it
holds at every point of training, at any learning rate, and requires no
projection step.  Evaluating \eqref{eq:householder_product} as written costs
\(F\) sequential rank-one updates, which serializes the backward pass and makes
\(F\) expensive to raise.  We instead use the compact WY representation
\begin{equation}
    H(\mathbf{v}_1)\cdots H(\mathbf{v}_F)=I-VTV^{\top},
    \label{eq:wy_form}
\end{equation}
where \(V=[\mathbf{v}_1,\dots,\mathbf{v}_F]\in\mathbb{R}^{D\times F}\) and
\(T\in\mathbb{R}^{F\times F}\) is upper triangular.  Because the reflection
vectors are unit-norm, every diagonal entry of \(T\) equals \(2\) and the
remaining entries follow the one-column recursion in
Algorithm~\ref{alg:wy}.  A batch \(X\in\mathbb{R}^{n\times D}\) is then
transformed by three dense products,
\(XR^{\top} = X-\big((XV)T\big)V^{\top}\), in \(O(nFD+nF^{2})\) time with no
sequential dependence, which is what makes \(F=64\) practical where the
sequential form is not; each row of \(XR^{\top}\) is \(R\) applied to that
embedding as in \eqref{eq:prefix_score}.  Algorithm~\ref{alg:wy} also states the equivalence check we run before
every experiment: the WY product must agree with the sequential product, and the
resulting map must preserve inner products, both to within \(10^{-4}\).

Reflection vectors are initialized in adjacent duplicate pairs,
\(\mathbf{v}_{2i-1}=\mathbf{v}_{2i}\).  Since \(H(\mathbf{v})H(\mathbf{v})=I\),
this makes \(R=I\) exactly at initialization, so training begins at the frozen
encoder and the adapter can only depart from it by descending the objective;
the paired vectors receive independent gradients and separate immediately.

The adapter holds \(FD\) free parameters, independent of the size of the
downstream index.  At \(F=64\) and \(D=2048\) this is \(131{,}072\) values
against the \(4{,}196{,}352\) of a single dense adaptor layer of the same width,
a factor of \(32\).  At deployment the WY factors are collapsed once into an
explicit matrix \(R\) and the recursion is never evaluated again: documents are
rotated offline, and a query resolved at width \(m\) costs \(O(Dm)\).

% \begin{algorithm}[t]
% \caption{Compact WY construction and application of an $F$-reflection
% orthogonal adapter ($F=64$ throughout).}
% \label{alg:wy}
% \begin{algorithmic}[1]
% \Require free parameters $U\in\mathbb{R}^{F\times D}$; batch $X\in\mathbb{R}^{n\times D}$
% \Ensure $XR^{\top}$ with $R^{\top}=H(\mathbf v_1)\cdots H(\mathbf v_F)$ orthogonal
% \Statex
% \Function{WYFactors}{$U$}
%   \State $V\gets\big(\text{row-normalize}(U)\big)^{\!\top}$
%          \Comment{$D\times F$, unit columns}
%   \State $\Gamma\gets V^{\top}V$ \Comment{$F\times F$ Gram, computed once}
%   \State $T\gets\mathbf{0}\in\mathbb{R}^{F\times F}$
%   \For{$j=1$ \textbf{to} $F$}
%     \State $T_{jj}\gets 2$
%     \If{$j>1$}
%       \State $T_{1:j-1,\,j}\gets-2\,T_{1:j-1,\,1:j-1}\,\Gamma_{1:j-1,\,j}$
%     \EndIf
%   \EndFor
%   \State \Return $V,T$
% \EndFunction
% \Statex
% \Function{Apply}{$X;V,T$}
%   \State \Return $X-\big((XV)\,T\big)V^{\top}$
% \EndFunction
% \Statex
% \Function{Materialize}{$V,T$} \Comment{once, at deployment}
%   \State \Return $\Call{Apply}{I_D;V,T}^{\top}$
% \EndFunction
% \Statex
% \Function{Verify}{$V,T$} \Comment{run before every experiment}
%   \State draw unit rows $Y\in\mathbb{R}^{64\times D}$;\;
%          $\tilde Y\gets\Call{Apply}{Y;V,T}$
%   \State \textbf{assert} $\lVert \tilde Y\tilde Y^{\top}-YY^{\top}\rVert_{\max}\le 10^{-4}$
%          \Comment{inner products preserved}
%   \State \textbf{assert} $\tilde Y$ matches the sequential product of
%          $H(\mathbf v_1),\dots,H(\mathbf v_F)$ to $10^{-4}$
% \EndFunction
% \end{algorithmic}
% \end{algorithm}

\begin{algorithm}[t]
\caption{Compact WY construction and application of the $F$-reflection
orthogonal adapter ($F=64$).}
\label{alg:wy}
\small
\begin{algorithmic}[1]
\Require $U\in\mathbb{R}^{F\times D}$; batch $X\in\mathbb{R}^{n\times D}$
\Ensure $XR^{\top}$, $R^{\top}=H(\mathbf v_1)\cdots H(\mathbf v_F)$
\Statex
\Function{WYFactors}{$U$}
  \State $V\gets\big(\text{row-normalize}(U)\big)^{\!\top}$ \Comment{unit columns}
  \State $\Gamma\gets V^{\top}V$ \Comment{Gram, once}
  \State $T\gets\mathbf{0}\in\mathbb{R}^{F\times F}$
  \For{$j=1$ \textbf{to} $F$}
    \State $T_{jj}\gets 2$
    \If{$j>1$}
      \State $T_{1:j-1,j}\gets-2T_{1:j-1,1:j-1}\Gamma_{1:j-1,j}$
    \EndIf
  \EndFor
  \State \Return $V,T$
\EndFunction
\Statex
\Function{Apply}{$X;V,T$}
  \State \Return $X-\big((XV)\,T\big)V^{\top}$
\EndFunction
\Statex
\Function{Materialize}{$V,T$} \Comment{at deployment}
  \State \Return $\Call{Apply}{I_D;V,T}^{\top}$
\EndFunction
\Statex
\Function{Verify}{$V,T$} \Comment{before each run}
  \State draw unit rows $Y\in\mathbb{R}^{64\times D}$
  \State $\tilde Y\gets\Call{Apply}{Y;V,T}$
  \State \textbf{assert} $\lVert\tilde Y\tilde Y^{\top}-YY^{\top}\rVert_{\max}\le10^{-4}$
  \State \textbf{assert} $\tilde Y$ matches $YH(\mathbf v_1)\cdots H(\mathbf v_F)$
\EndFunction
\end{algorithmic}
\end{algorithm}

\subsubsection{Two invariants}
Orthogonality gives the adapter two properties that an unconstrained map does
not have, and both are used by the training objective.

First, \emph{endpoint invariance}.  For an orthogonal \(R\) we have \(R^\top R=I\), so
\begin{equation}
    \langle Rx,Ry\rangle=x^{\top}R^{\top}Ry=\langle x,y\rangle ,
    \label{eq:orthogonal_endpoint}
\end{equation}
and therefore \(s^R_D(q,d)=s^I_D(q,d)\) for every pair.  A system that declines to truncate
recovers the frozen encoder exactly, so the adapter is safe to install ahead of
an index whose full-width behavior is already validated.  It also bounds what
the adapter can claim: any gain it reports is a gain under truncation, never a
change to the underlying retrieval quality.

Second, a \emph{conservation identity}.  Writing
\(z_j=(R\qvec)_j(R\dvec)_j\) for the per-coordinate contribution,
\begin{equation}
    \langle\qvec,\dvec\rangle
    =\underbrace{\sum_{j\le m} z_j}_{\text{prefix}}
    +\underbrace{\sum_{j>m} z_j}_{\text{tail}},
    \label{eq:conservation}
\end{equation}
where the left-hand side does not depend on \(R\).  The prefix residual is
therefore \emph{exactly} the tail sum: the total is fixed for each pair and only
its distribution over coordinates is learnable.  Because \(z_j\) is signed,
conservation binds each pair separately and not the margin between two pairs, so
the adapter can raise the prefix score of a relevant pair while lowering that of
a competitor.  This motivates the effective prefix information rate
\(\gamma_\tau\) analyzed in Section~\ref{sec:prefix_sufficiency_bound}: the
adapter cannot change a full-width score, only redistribute it across
coordinates.

\subsubsection{Training objective}
Let \(\mathcal{B}\) be a minibatch of training queries, each paired with
\(c=32\) candidates drawn from its shortlist and a multi-hot positive mask
\(P\).  Index the candidates of query \(b\) by \(j\), and let
\(\mathcal{G}_{bj}=\langle\qvec_b,\dvec_{bj}\rangle\) be the frozen full-width
scores and \(S^{(m)}_{bj}=s^R_m(q_b,d_{bj})\) the adapted prefix scores.  We
optimize a sum of three terms over a set of prefix widths
\(\mathcal{M}=\{64,128,256\}\).

The first is a multi-positive contrastive term applied at each width,
\begin{equation}
    \mathcal{L}_{\mathrm{rank}}
    =\frac{1}{|\mathcal{M}|}\sum_{m\in\mathcal{M}}
     \operatorname*{mean}_{b\in\mathcal{B}}
     \left[
       \log\!\sum_{j} e^{S^{(m)}_{bj}/\theta}
       -\log\!\!\sum_{j:P_{bj}=1}\!\! e^{S^{(m)}_{bj}/\theta}
     \right].
    \label{eq:rank_loss}
\end{equation}

The remaining two ask the truncated representation to reproduce the
full-width geometry.  Under an orthogonal map this target is unusually
well posed: by endpoint invariance, \(\mathcal{G}\) is simultaneously the frozen
encoder's score matrix and the adapter's own full-width output, so it is a
teacher that is exact that does not drift during training.

Two choices in how that residual is measured deserve to be stated, because
neither follows automatically from \eqref{eq:conservation}.  First, the
conserved quantity is the raw partial inner product \(\sum_{j\le m}z_j\),
whereas both terms below are measured on the renormalized prefix cosine of
\eqref{eq:prefix_score}.  A per-prefix norm is not part of the conserved total,
so driving \(S^{(m)}\) toward \(\mathcal{G}\) is a surrogate for the conservation
residual.  We measure it on cosines because
that is the quantity the deployed system ranks by; a training target on a
different scale from the scoring rule would optimize something the retriever
never computes.  Second, we weight each candidate by how contestable it is at
full width,
\begin{equation}
    W_{bj}=\exp\!\Big(-\big(\textstyle\max_{j'}\mathcal{G}_{bj'}-\mathcal{G}_{bj}\big)/\theta_b\Big),
    \label{eq:difficulty}
\end{equation}
held fixed (no gradient), so that the penalty concentrates on pairs a prefix
could plausibly reorder.  This gives
\begin{align}
    \mathcal{L}_{\mathrm{score}}
    &=\frac{1}{|\mathcal{M}|}\sum_{m}
      \operatorname*{mean}_{b}
      \frac{\sum_{j}W_{bj}\,\big|S^{(m)}_{bj}-\mathcal{G}_{bj}\big|}
           {\sum_{j}W_{bj}},
    \label{eq:score_loss}\\[2pt]
    \mathcal{L}_{\mathrm{geom}}
    &=\frac{1}{|\mathcal{M}|}\sum_{m}
      \Big\lVert
        \Pi_m\Pi_m^{\top}-QQ^{\top}
      \Big\rVert_{1},
    \label{eq:geom_loss}
\end{align}
where \(Q\) stacks the frozen queries of the batch, \(\Pi_m\) stacks
\(\pi_m(R\qvec_b)\), and \(\lVert\cdot\rVert_{1}\) denotes the entrywise mean
absolute deviation.  The objective is
\begin{equation}
    \mathcal{L}
    =\mathcal{L}_{\mathrm{rank}}
    +\lambda_{1}\mathcal{L}_{\mathrm{score}}
    +\lambda_{2}\mathcal{L}_{\mathrm{geom}}.
    \label{eq:adapter_objective}
\end{equation}
\(\lambda_1,\lambda_2>0\) weight the two preservation terms, \(\theta\) is
the temperature of the contrastive term and \(\theta_b\) that of the difficulty
weighting; Section~\ref{sec:eval} gives their values and the sweep they
come from. Algorithm~\ref{alg:adapter_train} gives the full procedure.  The same
optimizer, schedule, candidate pool, widths and contrastive term are used for
the unconstrained baseline, so that comparisons isolate the parameterization.

\begin{algorithm}[t]
\caption{Adapter training.  The encoder is frozen; only $U$ is learned.}
\label{alg:adapter_train}
\begin{algorithmic}[1]
\Require frozen train embeddings $Q\in\mathbb{R}^{n\times D}$, candidate index
         $J\in\mathbb{N}^{n\times c}$, positive mask $P$
\Require widths $\mathcal{M}$, weights $\lambda_1,\lambda_2$, temperatures
         $\theta,\theta_b$, epochs $n_{\mathrm{ep}}$, batch size $B$
\Ensure orthogonal $R$
\State $U\gets$ paired unit vectors so that $R=I$
       \Comment{$\mathbf v_{2i-1}=\mathbf v_{2i}$}
\For{$e=1$ \textbf{to} $n_{\mathrm{ep}}$}
  \ForAll{minibatches $\mathcal{B}$ of a random permutation of $[n]$}
    \State $Q_{\mathcal B}\gets Q[\mathcal B]$;\;
           $C_{\mathcal B}\gets\textsc{Corpus}[J[\mathcal B]]$
           \Comment{frozen}
    \State $\mathcal{G}\gets Q_{\mathcal B}C_{\mathcal B}^{\top}$
           \Comment{full-width teacher; fixed}
    \State $W\gets\exp\!\big(-(\operatorname{rowmax}\mathcal{G}-\mathcal{G})/\theta_b\big)$;\;
           detach $W$
    \State $V,T\gets\Call{WYFactors}{U}$
    \State $\tilde Q\gets\Call{Apply}{Q_{\mathcal B};V,T}$;\;
           $\tilde C\gets\Call{Apply}{C_{\mathcal B};V,T}$
    \For{$m\in\mathcal{M}$}
      \State $S^{(m)}\!\gets$ prefix cosines of $\tilde Q,\tilde C$ at width $m$
    \EndFor
    \State $\mathcal{L}\gets
            \mathcal{L}_{\mathrm{rank}}
            +\lambda_1\mathcal{L}_{\mathrm{score}}
            +\lambda_2\mathcal{L}_{\mathrm{geom}}$
            \Comment{Eqs.~\eqref{eq:rank_loss}--\eqref{eq:geom_loss}}
    \State AdamW step on $U$; clip $\lVert\nabla\rVert_2$ at $5$
  \EndFor
\EndFor
\State \Call{Verify}{$V,T$};\;
       \Return \Call{Materialize}{$V,T$}
\end{algorithmic}
\end{algorithm}

% ===========================================================================
\subsection{Per-Query Prefix Router}
\label{sec:router}

The adapter changes how quickly accuracy accrues with width; it does not choose
a width.  Queries vary in the width at which their rankings stabilize, so a
fixed prefix must be set for the hardest queries in the workload.  The router
recovers the difference.  Section~\ref{sec:prefix_sufficiency_bound} later
formalizes this variation and explains why per query allocation is useful.

\subsubsection{Escalation over an action pair}
Retrieval proceeds in two stages.  Every query is first ranked at a fixed stage-1
width \(m_0=64\) over its shortlist; this pass is unconditional and its cost is
charged to every query.  A router then predicts whether re-ranking the same
shortlist at a wider prefix would change the top-\(10\), and only escalated
queries pay for the second pass.  Concretely, an \emph{action pair}
\((\ell,h)\) with \(m_0\le\ell<h\) fixes the two widths a query may be resolved
at, and the policy escalates a fraction \(\rho\) of queries, giving
\begin{equation}
    \overline{m}(\rho)=(1-\rho)\,\ell+\rho\,h,
    \label{eq:mean_width}
\end{equation}
mean retained coordinates per query.  Because \(\ell\ge m_0\) and queries and
documents share one coordinate order, the stage-1 partial sums and partial norms
are reused, so a query resolved at width \(m\) costs \(m\) rather than
\(m_0+m\).  Sweeping \(\rho\in[0,1]\) traces a curve
whose two endpoints are exactly the fixed policies at \(\ell\) and at \(h\), so
a router is only credited for what it adds over static truncation at the same
average cost.  We use
\(\mathcal{A}=\{(64,128),(64,256),(128,256)\}\).

\subsubsection{Features}
The router sees only order statistics of the stage-1 score vector.  No relevance
label, no document identity, and no second encoder pass is involved: the
features are read off a ranking the system has already computed, so routing adds
no work that a single-stage system would not have done.  Let
\(\sigma_1\ge\sigma_2\ge\cdots\ge\sigma_{500}\) be the sorted stage-1 scores,
\(p=\sigma_{1:50}\), and \(\mathbb{H}=-\sum_i\eta_i\log\eta_i\) the entropy of
\(\eta=\mathrm{softmax}(p/\theta)\), at the same temperature as the contrastive
term.  The \(18\) features are
\begin{align*}
\text{levels:}\;\; &\sigma_1,\sigma_2,\sigma_5,\sigma_{10},\sigma_{11},\sigma_{20};\\
\text{gaps:}\;\; &\sigma_1-\sigma_2,\;\sigma_1-\sigma_{10},\;
                    \sigma_{10}-\sigma_{11},\;\sigma_{10}-\sigma_{15};\\
\text{moments:}\;\; &\operatorname{mean}(\sigma_{1:10}),\;\operatorname{std}(\sigma_{1:10}),\;
                    \operatorname{mean}(p),\;\operatorname{std}(p);\\
\text{shape:}\;\; &\mathbb{H},\;e^{\mathbb{H}},\;
                    \tfrac{\sigma_1-\operatorname{mean}(p)}{\operatorname{std}(p)},\;
                    \tfrac{\operatorname{std}(\sigma_{1:100})}
                          {\lvert\operatorname{mean}(\sigma_{1:100})\rvert}.
\end{align*}
The gap \(\sigma_{10}-\sigma_{11}\) straddles the evaluated cutoff and is the quantity a
wider prefix must overcome to change NDCG@10; \(e^{\mathbb{H}}\) is an effective count of
plausible candidates; and the scale-free last two features let one router
transfer across tasks and encoders whose score distributions differ in location
and spread.

\subsubsection{Fitting and use}
The supervision signal is the \emph{escalation gain}
\begin{equation}
    g(q)=\mathrm{NDCG@10}_h(q)-\mathrm{NDCG@10}_\ell(q),
    \label{eq:escalation_gain}
\end{equation}
computed on training queries only, where relevance labels are available.  A
gradient-boosted regression ensemble is fitted to predict \(g\) from the
features; at query time its prediction ranks the evaluation queries and the top
\(\rho\) fraction is escalated (Algorithm~\ref{alg:router}).  Because
\(g\) is a difference of two quantities the training split already measures,
fitting the router requires no additional forward passes through the encoder.

Two diagnostics accompany the reported curve.  Replacing the ensemble with a
multilayer perceptron on the same features, and with the router of a prior
operating-point study, tests whether the result depends on the predictor rather
than on the features; we report the upper envelope over the three, with seeds
averaged before the envelope is taken.  Replacing the prediction with the true
\(g\) yields an \emph{oracle} policy that upper-bounds any router restricted to
the same action pair, and separates the question of whether per-query allocation
helps at all from the question of whether the current predictor is what limits
it.

\begin{algorithm}[t]
\caption{Router fitting and query-time routed retrieval.}
\label{alg:router}
\begin{algorithmic}[1]
\Require adapter $R$; train split with labels; action pair $(\ell,h)$;
         escalation fraction $\rho$
\Statex\textbf{Fitting (once per action pair)}
\State transform and store the corpus at width $h$ using $R$
\ForAll{training queries $q$}
  \State $\phi(q)\gets\Call{Features}{\text{stage-1 ranking of }q\text{ at }m_0}$
  \State $g(q)\gets\mathrm{NDCG@10}_h(q)-\mathrm{NDCG@10}_\ell(q)$
\EndFor
\State fit $f$ on $\{(\phi(q),g(q))\}$
       \Comment{gradient-boosted trees}
\Statex\textbf{Retrieval (per query)}
\State $\qvec\gets R\,E(q)$
\State $\sigma\gets$ scores of $\pi_{m_0}(\qvec)$ against $\mathcal{C}(q)$,
       sorted
\State $\hat g\gets f(\Call{Features}{\sigma})$
\If{$\hat g$ is in the top $\rho$ fraction of the workload}
  \State re-rank $\mathcal{C}(q)$ at width $h$
\Else
  \State re-rank $\mathcal{C}(q)$ at width $\ell$
\EndIf
\State \Return top $10$
\end{algorithmic}
\end{algorithm}

\paragraph{Complexity.}
Rotating a query reads only the leading $h$ coordinates and costs $O(Dh)$, once
per query and independent of the shortlist. Because queries and documents share
one coordinate order, stage~1 leaves usable partial sums, so a query resolved at
width $m$ scores its shortlist in $O(|\mathcal{C}(q)|\,m)$ and the workload
averages $\Theta(|\mathcal{C}(q)|\,\overline{m}(\rho))$;
routing adds a partial sort of the stage-1 scores and one ensemble evaluation on
$18$ features. The skipped coordinates form a contiguous tail, so the index
stores $Nh$ values and a query reads exactly the values it multiplies, whereas a
scattered query-specific subset keeps $ND$ and gathers from a full-width record.
Offline, the corpus is rotated once in $O(NFD + NF^2)$, and the adapter holds
$FD$ parameters against the $\Theta(D^2)$ of a dense map.

\section{Prefix Sufficiency Analysis}
\label{sec:prefix_sufficiency_bound}
 
We now analyze how much width a query requires.  The analysis relates the
required width to corpus size, retrieval depth, and the rate at which
discriminative information accumulates in the prefix.  These results provide
a theoretical justification for the adapter and router in AdaWidth.
 
\subsection{Per-Query Bound}
We formalize the width required for a relevant document to enter the
retrieved set.  Let \(s_m(q,d)\) be the prefix score \eqref{eq:prefix_score}
between query \(q\) and document \(d\) at width \(m\).  For a relevant
document \(d^+\) and a non-relevant document \(d_i^-\), define the prefix margin
\begin{equation}
    \Delta_{qi}(m)
    = s_m(q,d^+) - s_m(q,d_i^-).
    \label{eq:prefix_margin}
\end{equation}
Here \(N\) is the number of documents in the corpus and \(k\) the retrieval
depth.  We write \(M_{\mathrm{first}}(q)\) for the smallest width at which
\(d^{+}\) enters the Top-\(k\) result and stays there, and \(\mathrm{q}_\tau[\cdot]\) for
the \(\tau\)-quantile of a quantity over the query workload; \(\tau\) is used
throughout this section for a quantile level and never for a temperature.
A non-relevant document overtakes \(d^+\) whenever
\(\Delta_{qi}(m)\leq 0\).  The number of such overtakers is
\begin{equation}
    Z_q(m)
    = \sum_{i=1}^{N-1}
      \mathbb{I}\!\left[\Delta_{qi}(m)\leq 0\right].
    \label{eq:overtaker_count}
\end{equation}
Thus, \(d^+\) lies outside the Top-\(k\) result at width \(m\) exactly when
\(Z_q(m)\geq k\).  This is an order-statistic characterization of prefix
sufficiency: the decision is controlled by the \(k\)-th strongest competing
document rather than by reconstruction error of an individual similarity
score.  All probabilities below are taken over the draw of the competing
documents from the corpus distribution, with \(q\) and \(d^{+}\) held fixed.
 
\begin{proposition}[Prefix sufficiency]
\label{prop:prefix_sufficiency}
Suppose that, over the prefix range relevant to Top-\(k\) retrieval, the
probability that any individual non-relevant document overtakes \(d^+\)
decreases exponentially with width:
\begin{equation}
    \Pr\!\left[\Delta_{qi}(m)\leq 0\right]
    \leq A_q \exp(-\gamma_q m),
    \label{eq:overtaking_decay}
\end{equation}
where \(A_q>0\) is a per-query constant and \(\gamma_q>0\) is an effective
\emph{prefix information rate}. Then
\begin{equation}
    \Pr[Z_q(m)\geq k]
    \leq \frac{N A_q \exp(-\gamma_q m)}{k},
    \label{eq:topk_failure_bound}
\end{equation}
and a sufficient condition for a failure probability of at most \(\delta\) is
\begin{equation}
    m
    \geq
    \frac{\log A_q-\log\delta}{\gamma_q}
    +\frac{1}{\gamma_q}\log N
    -\frac{1}{\gamma_q}\log k.
    \label{eq:simple_prefix_bound}
\end{equation}
\end{proposition}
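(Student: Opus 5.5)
The plan is to apply Markov's inequality to the overtaker count \(Z_q(m)\) in \eqref{eq:overtaker_count}, and then invert the resulting tail bound for \(m\).

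\emph{Step one: the first moment.} I will bound \(\mathbb{E}[Z_q(m)]\). By linearity of expectation over the \(N-1\) indicators, with \(q\) and \(d^+\) held fixed and probability taken over the draw of competitors,
\begin{equation*}
\mathbb{E}[Z_q(m)]=\sum_{i=1}^{N-1}\Pr[\Delta_{qi}(m)\le 0]\le (N-1)A_q e^{-\gamma_q m}\le N A_q e^{-\gamma_q m},
\end{equation*}
using the hypothesis \eqref{eq:overtaking_decay} term by term. No independence between competitors is needed, since linearity holds regardless of dependence. This matters because the competitors' margins all share \(q\) and \(d^+\) and are therefore correlated.

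\emph{Step two: Markov.} Since \(Z_q(m)\ge 0\), Markov's inequality gives \(\Pr[Z_q(m)\ge k]\le \mathbb{E}[Z_q(m)]/k\), which is exactly \eqref{eq:topk_failure_bound}.

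\emph{Step three: inversion.} I require \(N A_q e^{-\gamma_q m}/k\le\delta\). Taking logarithms, this becomes \(\log N+\log A_q-\gamma_q m-\log k\le\log\delta\). Dividing by \(\gamma_q>0\) and rearranging yields \eqref{eq:simple_prefix_bound}. Any \(m\) satisfying that inequality therefore makes the failure probability at most \(\delta\).

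The main obstacle is interpretive rather than technical. The hypothesis is assumed only "over the prefix range relevant to Top-\(k\) retrieval," so the bound holds only for \(m\) within that range, and I will state this restriction explicitly. The event \(Z_q(m)\ge k\) concerns a single width \(m\), whereas \(M_{\mathrm{first}}(q)\) also asks that \(d^+\) stay in the Top-\(k\) at all larger widths. I will therefore present the proposition as a pointwise-in-\(m\) guarantee. If a uniform version is wanted, it follows from a union bound over widths \(m'\ge m\): the geometric sum gives an extra factor \(1/(1-e^{-\gamma_q})\), which only shifts the constant term.
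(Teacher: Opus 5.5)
Your proposal is correct and matches the paper's proof step for step: linearity of expectation (valid under arbitrary dependence among the indicators), Markov's inequality at threshold $k$, and a logarithmic inversion using $\gamma_q>0$; your union bound over widths $m'\ge m$ also matches the paper's remark on $M_{\mathrm{first}}$ after the proof. One small quibble: $q$ and $d^+$ are held fixed, so any dependence among overtaking events comes from the joint law of the competitors themselves (for example near-duplicates, as the paper notes), not from their sharing $q$ and $d^+$. This does not affect your argument.
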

 
\begin{proof}
\(Z_q(m)\) is a sum of \(N-1\) indicators, so expectation passes through it
term by term, and \eqref{eq:overtaking_decay} bounds each term:
\begin{equation}
    \mathbb{E}[Z_q(m)]
    =\sum_{i=1}^{N-1}\Pr\!\left[\Delta_{qi}(m)\leq 0\right]
    \leq N A_q \exp(-\gamma_q m).
    \label{eq:expected_overtakers}
\end{equation}
Linearity holds whatever the joint law of the competing documents, which
matters here because a retrieval corpus contains near-duplicates whose
overtaking events are strongly dependent.

The count \(Z_q(m)\) is non-negative, so Markov's inequality applies at the
threshold \(k\),
\[
    \Pr[Z_q(m)\geq k]\leq\frac{\mathbb{E}[Z_q(m)]}{k},
\]
which together with \eqref{eq:expected_overtakers} gives
\eqref{eq:topk_failure_bound}.  A Chernoff bound would replace the factor
\(1/k\) by an exponentially decaying one, but it requires the independence
that the same near-duplicates rule out.

It remains to solve for the width.  Requiring the right-hand side of
\eqref{eq:topk_failure_bound} to be at most \(\delta\) and taking logarithms,
\begin{equation}
    \frac{N A_q}{k}\exp(-\gamma_q m)\leq\delta
    \iff
    m\geq\frac{1}{\gamma_q}\log\frac{N A_q}{\delta k},
    \label{eq:solve_for_m}
\end{equation}
where \(\gamma_q>0\) makes the division preserve the direction of the
inequality.  Expanding the logarithm into its three terms gives
\eqref{eq:simple_prefix_bound}.
\end{proof}
 
Since \(M_{\mathrm{first}}>m\) requires failure at some width \(m'\ge m\),
summing \eqref{eq:topk_failure_bound} over \(m'\) gives
\begin{equation}
    \Pr[M_{\mathrm{first}}>m]
    \leq
    \frac{N A_q}{k}\cdot\frac{\exp(-\gamma_q m)}{1-\exp(-\gamma_q)},
    \label{eq:first_hit_tail}
\end{equation}
which changes only the constant term of \eqref{eq:simple_prefix_bound} and
leaves the coefficients of \(\log N\) and \(\log k\) unchanged.  Applied to a
\(\tau\)-level query-difficulty envelope with \(\delta=1-\tau\), it predicts
that the required width increases logarithmically with corpus size and
decreases logarithmically with retrieval depth.
 
\subsection{Scaling over the Workload}
The basic bound assumes that the nominal corpus size and Top-\(k\) tolerance
translate directly into independent competition opportunities.  Correlated
documents, local density, and duplicated content can instead produce an
effective failure envelope
\begin{equation}
    \Pr[Z_q(m)\geq k]
    \leq
    \kappa_\tau N^{\alpha_\tau}k^{-\beta_\tau}
    \exp(-\gamma_\tau m).
    \label{eq:generalized_failure_envelope}
\end{equation}
Here \(\kappa_\tau>0\) is a constant, which also absorbs the geometric factor of
\eqref{eq:first_hit_tail}, and \(\gamma_\tau>0\) is the information rate of the
\(\tau\)-level envelope, the workload-level counterpart of the per-query
\(\gamma_q\) in \eqref{eq:overtaking_decay}, while \(\alpha_\tau,\beta_\tau>0\)
respectively describe the growth of
effective competition and the tolerance provided by a deeper retrieved set.
Solving this bound for \(m\) yields the three-parameter response
\begin{equation}
    \mathrm{q}_\tau[M_{\mathrm{first}}(N,k)]
    \lesssim
    a_\tau+b_\tau\log N+c_\tau\log k,
    \label{eq:three_parameter_scaling}
\end{equation}
with
\begin{equation}
    a_\tau=\frac{\log(\kappa_\tau/\delta)}{\gamma_\tau},
    \qquad
    b_\tau=\frac{\alpha_\tau}{\gamma_\tau},
    \qquad
    c_\tau=-\frac{\beta_\tau}{\gamma_\tau}.
    \label{eq:scaling_coefficients}
\end{equation}
The proposition therefore predicts the signs \(b_\tau>0\) and \(c_\tau<0\), while
allowing their magnitudes to vary across datasets and encoders.  It also
assigns a geometric meaning to the corpus-size slope: representations that
accumulate discriminative information more rapidly in their leading
coordinates have larger \(\gamma_\tau\) and hence smaller \(b_\tau\).
 
This interpretation explains the role of our orthogonal adapter.  By
\eqref{eq:orthogonal_endpoint}, the adapter cannot improve retrieval by changing the full-width endpoint;
instead, it reorganizes existing discriminative signal toward earlier
coordinates.  In the bound above, this corresponds to increasing the
effective prefix information rate and flattening the growth of the required
width with corpus size.

\section{Evaluation}
\label{sec:eval}

Our evaluation answers four questions. Section~\ref{sec:eval_scaling_validation} tests whether the scaling form predicted by the prefix-sufficiency bound in Section~\ref{sec:prefix_sufficiency_bound} holds empirically.
Sections~\ref{sec:eval_text} and~\ref{sec:eval_vqa} compare the accuracy retained by AdaWidth at different widths with three baselines on text and vision--language retrieval. Section~\ref{sec:eval_capacity} examines the effect of adapter capacity, followed by the ablation study in
Section~\ref{sec:eval_ablation}.

\textit{Setup.}
All experiments run on a single NVIDIA A100-SXM4 GPU with 40\,GB of device
memory, hosted on a node with 32 vCPUs and 115\,GB of system memory.  Encoding, adapter training and evaluation use PyTorch 2.5.1 with CUDA 12.1.

\textit{Datasets.}
We evaluate on four text retrieval tasks from BEIR~\cite{thakur2021beir} and two vision--language retrieval tasks, summarized in
Table~\ref{tab:datasets}.

\begin{table}[t]
    \centering
    \caption{Evaluation datasets.}
    \label{tab:datasets}
    \small
    \renewcommand{\arraystretch}{1.08}
    \begin{tabular*}{\columnwidth}{@{\extracolsep{\fill}}llr@{}}
        \toprule
        Dataset & Retrieval task & Corpus size \\
        \midrule
        FiQA~\cite{maia201818}
            & Financial question answering
            & 57{,}638 \\
        ArguAna~\cite{wachsmuth2018retrieval}
            & Counterargument retrieval
            & 8{,}674 \\
        Quora~\cite{thakur2021beir}
            & Duplicate-question retrieval
            & 522{,}931 \\
        MS~MARCO~\cite{nguyen2016ms}
            & Web passage retrieval
            & 2{,}000{,}000$^{\dagger}$ \\
        A-OKVQA~\cite{schwenk2022okvqa}
            & Knowledge retrieval for VQA
            & 30{,}803 \\
        OK-VQA~\cite{marino2019ok}
            & Knowledge retrieval for VQA
            & 15{,}039 \\
        \bottomrule
        \multicolumn{3}{@{}l}{\footnotesize
        $^{\dagger}$ A 2M-document subsample.}
    \end{tabular*}
\end{table}

\textit{Encoders.}
We evaluate four frozen text encoders: E5-Mistral-7B~\cite{e5Mistral} with 7.1B parameters and 4096 dimensions, Qwen3-Embedding-0.6B~\cite{qwen3technicalreport} with 0.6B parameters and 1024 dimensions, Nomic-embed-v1.5~\cite{nussbaum2024nomic} with 137M parameters and 768 dimensions, and BGE-large-en-v1.5~\cite{bge_embedding} with 335M parameters and 1024 dimensions. We also evaluate the frozen vision language encoder Qwen3-VL-Embedding-2B~\cite{qwen3technicalreport} with 2B parameters and 2048 dimensions. Qwen3-Embedding-0.6B, Nomic-embed-v1.5, and Qwen3-VL-Embedding-2B are trained with Matryoshka representation learning, whereas the other encoders are not. AdaWidth and all baselines are trained independently on each encoder.

\textit{Baselines.}
We compare against three learned methods, Matryoshka Adaptor~\cite{yoon2024matryoshka}, SMEC~\cite{zhang2025smec} and Learning-to-Select~\cite{wu2026learning}, each retrained on every encoder and dataset. We also report prefix truncation, which keeps the first $m$ coordinates of the frozen embedding.

\textit{Hyperparameters.}
The adapter's loss weights and temperatures were swept on SciFact~\cite{scifact} and NFCorpus~\cite{nfcorpus}, two BEIR tasks we measure but do not report. We considered
$\lambda_1,\lambda_2\in\{1,\allowbreak 2,\allowbreak 4,\allowbreak 8\}$,
$\theta_b\in\{0.02,\allowbreak 0.05,\allowbreak 0.10,\allowbreak 0.20\}$, and
$\theta\in\{0.02,\allowbreak 0.05,\allowbreak 0.10\}$, using three seeds for each setting. We fix $\lambda_1=2$, $\lambda_2=8$, and $\theta=\theta_b=0.05$, and use this single setting for every encoder, dataset, and modality reported below.

% \begin{table*}[t]
%     \centering
%     \caption{Validation of the predicted q90 scaling form.  For each of the
%     four text datasets, we report the encoder with the mean-point \(R^2\) among BGE-large, Qwen3-0.6B, E5-Mistral-7B and Nomic. The two vision-language datasets use Qwen3-VL-Embedding-2B.  Coefficients are dataset- and encoder-specific.}
%     \label{tab:q90_mean_point_validation}
%     \begin{tabular}{@{}lll c@{}}
%         \toprule
%         Dataset & Encoder & Fitted q90 response & Mean-point \(R^2\) \\
%         \midrule
%         FiQA
%             & BGE-large
%             & \(0.69+27.33\log N-40.77\log k\)
%             & \(0.967\) \\
%         ArguAna
%             & Qwen3-0.6B
%             & \(-59.41+33.28\log N-43.01\log k\)
%             & \(0.973\) \\
%         Quora
%             & E5-Mistral-7B
%             & \(-172.59+48.72\log N-52.23\log k\)
%             & \(0.974\) \\
%         MS MARCO
%             & Nomic
%             & \(-3.28+5.63\log N-6.81\log k\)
%             & \(0.926\) \\
%         OK-VQA
%             & Qwen3-VL-2B
%             & \(-12.42+18.98\log N-39.91\log k\)
%             & \(0.917\) \\
%         A-OKVQA
%             & Qwen3-VL-2B
%             & \(-23.97+17.55\log N-34.91\log k\)
%             & \(0.905\) \\
%         \bottomrule
%     \end{tabular}
% \end{table*}

\begin{figure*}[t]
    \centering
    \includegraphics[width=\textwidth]{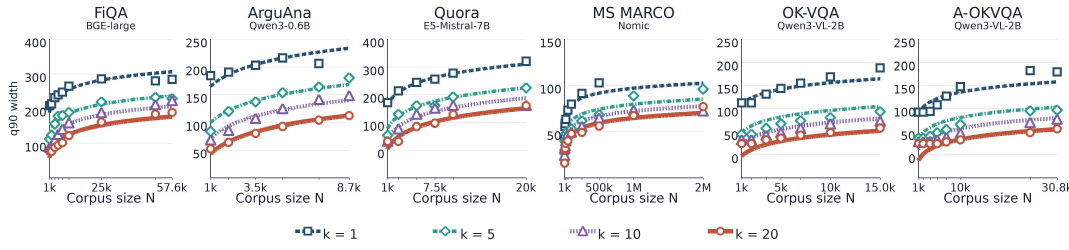}
    \caption{Validation of the predicted q90 scaling.}
    
    \label{fig:q90fit}
\end{figure*}

\subsection{Validation of the Predicted Width Scaling}
\label{sec:eval_scaling_validation}

We first test the functional form predicted by
Section~\ref{sec:prefix_sufficiency_bound}, independently of the adapter and
router.  For each dataset--encoder pair, we vary corpus size \(N\) and
retrieval depth \(k\), construct three positive-preserving corpus subsamples,
and measure the 90th percentile of \(M_{\mathrm{first}}\).  We then fit
\begin{equation}
    \mathrm{q}_{0.9}[M_{\mathrm{first}}(N,k)]
    =a+b\log N+c\log k
    \label{eq:q90_empirical_fit}
\end{equation}
by ordinary least squares.

Our primary statistic is \emph{mean-point} \(R^2\).  Specifically, we first
average the three observed q90 values at every \((N,k)\) setting and then
evaluate the fitted response surface:
\begin{equation}
    R^2_{\mathrm{mean}}
    =
    1-
    \frac{
        \sum_{N,k}
        \left(\bar m_{N,k}-\widehat m_{N,k}\right)^2
    }{
        \sum_{N,k}
        \left(\bar m_{N,k}-\bar m\right)^2
    }.
    \label{eq:mean_point_r2}
\end{equation}
As \(R^2\) is commonly interpreted as the fraction of response variation
explained by a fitted model~\cite{draper1998applied,montgomery2022introduction,chicco2021coefficient},
we treat \(R^2_{\mathrm{mean}}>0.90\) as a good fit in this analysis.  Under
Equation~\eqref{eq:mean_point_r2}, this criterion means that the fitted surface
explains more than \(90\%\) of the variation in the mean response across
operating conditions.  Averaging first removes variation due solely to the
sampled negative subset.  We complement this descriptive statistic with
grouped held-out tests in subsequent analyses.

Figure~\ref{fig:q90fit} shows a consistent directional
pattern across text and vision-language retrieval.  The four text tasks
obtain mean-point \(R^2\) values between \(0.926\) and \(0.973\), with FiQA, ArguAna, and Quora exceeding \(0.96\).  The same three-parameter form remains strong on OK-VQA (\(0.917\)) and captures the majority of the mean variation on A-OKVQA (\(0.905\)).  These results support a shared first-order scaling structure: the response to \(N\) and \(k\) is stable in form, while its magnitude remains representation- and dataset-dependent.

\subsection{Quality and Dimensionality}
\label{sec:eval_text}

\begin{figure*}[t]
    \centering
    \includegraphics[width=\textwidth]{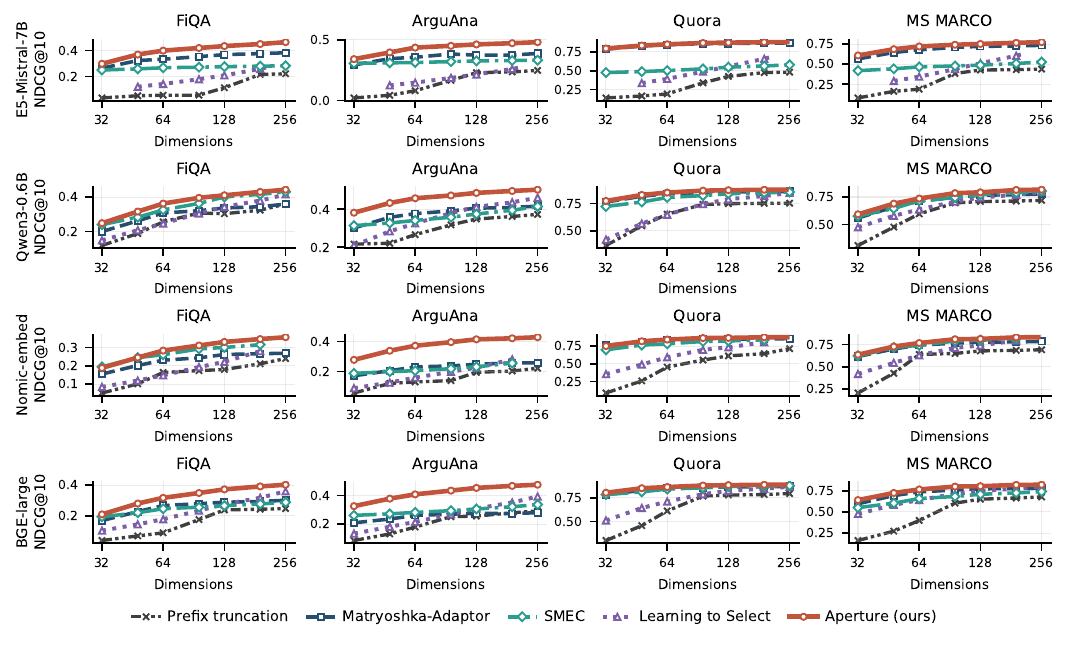}
    \caption{Cost against quality on the four tasks, one row per frozen encoder.  The x-axis is mean retained coordinates per query; the y-axis is NDCG@10.  Every curve is the mean of three seeds.}
    
    \label{fig:text-results}
\end{figure*}

Figure~\ref{fig:text-results} reports the four text tasks on four encoders. Averaged over
operating points the margin is $+5.88$ NDCG@10 against Matryoshka-Adaptor,
$+9.94$ against SMEC and $+15.77$ against Learning-to-Select.  Read the other
way, at $32$ retained coordinates Matryoshka-Adaptor needs $2.36\times$ the width
to reach the NDCG@10 AdaWidth already has, and at $48$ coordinates it needs
$3.04\times$; against Learning-to-Select the same two anchors give $4.96\times$
and $6.23\times$.  Ratios are computed for each task and encoder separately and aggregated by geometric mean. We quote these two anchors because they are the widths at which nothing is censored.  At $64$ coordinates and above a baseline sometimes never reaches AdaWidth's accuracy at any width it was measured at, and averaging only over the cases that do would silently drop the ones where the margin is largest.

The margin holds across both axes of the grid.  Broken down by task, the gain
over Matryoshka-Adaptor is $+11.70$ NDCG@10 on ArguAna, $+6.56$ on FiQA,
$+3.60$ on MS~MARCO and $+1.36$ on Quora; against SMEC the four tasks give
$+12.23$, $+6.41$, $+9.68$ and $+11.54$, and against Learning-to-Select
$+16.89$, $+12.25$, $+13.01$ and $+21.33$.  Broken down by encoder, the gain
over Matryoshka-Adaptor ranges from $+4.62$ on E5-Mistral-7B to $+7.24$ on
BGE-large, with Qwen3-0.6B at $+5.06$ and Nomic-embed at $+6.61$; the
corresponding figures are $+20.58$, $+8.17$, $+4.80$ and $+5.86$ against SMEC,
and $+26.82$, $+11.80$, $+10.65$ and $+15.17$ against Learning-to-Select.  The
three encoders trained with Matryoshka representation learning and the two
trained without it are all represented in this range, so the gain does not
depend on the frozen encoder already carrying a prefix structure.

In absolute terms, averaged over every task and encoder at which all methods
reach every anchor, AdaWidth attains NDCG@10 of $0.431$ at $32$ retained coordinates,
$0.499$ at $48$, $0.534$ at $64$, $0.576$ at $128$ and $0.597$ at $256$.  The
corresponding values are $0.377$, $0.438$, $0.472$, $0.500$ and $0.510$ for
Matryoshka-Adaptor, $0.393$, $0.433$, $0.462$, $0.503$ and $0.534$ for SMEC,
and $0.265$, $0.342$, $0.390$, $0.485$ and $0.550$ for Learning-to-Select.
AdaWidth at $64$ coordinates exceeds what Matryoshka-Adaptor reaches at $256$,
and at $96$ coordinates it exceeds what all three baselines reach at $256$.
Prefix truncation reaches $0.155$, $0.247$, $0.333$, $0.434$ and $0.462$ at the
same five widths, below every learned method at every width.

\subsection{Quality and Dimensionality in Vision Language Retrieval}
\label{sec:eval_vqa}

\begin{figure}[t]
    \centering
    \includegraphics[width=\columnwidth]{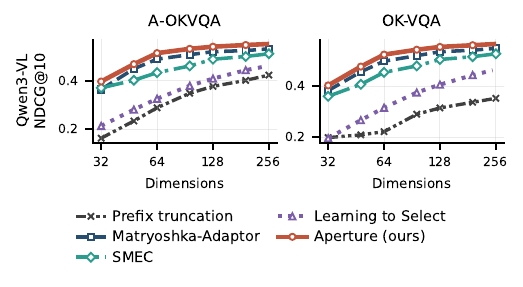}
    \caption{Cost against quality on the two knowledge-VQA retrieval tasks,
    with a frozen Qwen3-VL-Embedding-2B encoder.  Axes and protocol are those of Figure~\ref{fig:text-results}.}
    \label{fig:vqa-results}
\end{figure}

Figure~\ref{fig:vqa-results} tests that on two knowledge-VQA
retrieval tasks, where a question and an image are encoded jointly by a frozen
Qwen3-VL-2B and matched against a corpus of textual knowledge snippets.

AdaWidth is ahead at all $12$ shared operating points against all three
baselines, by $+2.27$ NDCG@10 on average against Matryoshka-Adaptor, $+5.68$
against SMEC and $+16.70$ against Learning-to-Select.

The two tasks behave alike.  On A-OKVQA, AdaWidth reaches NDCG@10 of $0.397$ at
$32$ retained coordinates, $0.514$ at $64$ and $0.540$ at $128$, against
$0.364$, $0.489$ and $0.519$ for Matryoshka-Adaptor, $0.370$, $0.433$ and
$0.488$ for SMEC, and $0.213$, $0.325$ and $0.409$ for Learning-to-Select.  On
OK-VQA the same three widths give $0.403$, $0.524$ and $0.554$ for AdaWidth,
against $0.383$, $0.498$ and $0.535$ for Matryoshka-Adaptor, $0.361$, $0.455$
and $0.505$ for SMEC, and $0.196$, $0.316$ and $0.406$ for Learning-to-Select.
Prefix truncation gives $0.163$, $0.288$ and $0.377$ on A-OKVQA and $0.199$,
$0.222$ and $0.315$ on OK-VQA.
The adapter is therefore worth $23$ and $30$ NDCG@10 at $64$ coordinates;
on both tasks AdaWidth at $64$ coordinates is above prefix truncation at
$256$.

The scaling form of Section~\ref{sec:eval_scaling_validation} also holds on both
tasks with the predicted signs, though with the weakest fits in the study
($R^2$ of $0.917$ and $0.905$ against $0.926$--$0.973$ on text).  Vision--language
retrieval is where we would expect the bound to be loosest: the competing
documents are far from independent, since a knowledge corpus assembled around a
question set contains many near-duplicates, and it is exactly that dependence
that the generalized envelope in
Equation~\eqref{eq:generalized_failure_envelope} absorbs into
$\alpha_\tau$ and $\beta_\tau$.

\subsection{Sensitivity to Adapter Capacity}
\label{sec:eval_capacity}

\begin{table}[t]
    \centering
    \caption{Quality against adapter size.  $F$ is the number of Householder reflections. Parameter counts are averaged over the same encoders.}
    \label{tab:capacity}
    \small
    \setlength{\tabcolsep}{6pt}
    \begin{tabular}{@{}lrr@{}}
        \toprule
        Configuration & Parameters & NDCG@10 \\
        \midrule
        Learning-to-Select     & $689{,}728$   & $.4604$ \\
        Matryoshka-Adaptor     & $2{,}901{,}312$ & $.4983$ \\
        SMEC                   & $3{,}870{,}379$ & $.4966$ \\
        \midrule
        AdaWidth, $F=8$        & $13{,}824$    & $.5716$ \\
        AdaWidth, $F=16$       & $27{,}648$    & $.5753$ \\
        AdaWidth, $F=32$       & $55{,}296$    & $.5787$ \\
        AdaWidth, $F=64$       & $110{,}592$   & $.5769$ \\
        AdaWidth, $F=128$      & $221{,}184$   & $.5684$ \\
        \bottomrule
    \end{tabular}
\end{table}

% Table~\ref{tab:capacity} varies the one setting that controls how much the
% adapter can learn.  Sweeping $F$ from $8$ to $128$ moves the parameter count
% over a sixteen-fold range and moves NDCG@10 by $1.03$; every setting in that
% range is above every baseline, and the weakest of them leads the strongest
% baseline by $7.01$ NDCG@10.  The result is uniform across the grid: at each of the five capacities, AdaWidth is above all three
% baselines on every task and encoder, spanning corpus sizes from
% $8.7$K to $2$M documents.  Between $F=16$ and $F=64$ the spread is $0.34$
% NDCG@10, so the setting can be chosen from the deployment's parameter budget.

% The comparison also runs the other way.  At $F=8$ the adapter holds $13{,}824$ parameters, $50\times$ fewer than Learning-to-Select and $210\times$ and $280\times$ fewer than Matryoshka-Adaptor and SMEC, and still leads all three by more than $7$
% NDCG@10.  A product of $F$ reflections costs $F\!\cdot\!D$ parameters against the $\Theta(D^{2})$ a dense map requires, so the gap widens with the encoder
% width rather than closing: at $F=64$ the adapter is $12\times$ smaller than
% a dense $D\times D$ map on the $768$-dimensional encoder, $16\times$ on the
% $1024$-dimensional encoders and $64\times$ on the $4096$-dimensional one.  The
% advantage therefore grows as encoders do, and it is realized without the
% retraining a wider dense map would require.

Table~\ref{tab:capacity} shows that AdaWidth is insensitive to adapter capacity. Sweeping $F$ from $8$ to $128$ changes NDCG@10 by only $1.03$ points, and every setting remains above all baselines, with the weakest leading the strongest baseline by $7.01$ points. This pattern holds across all tasks and encoders, spanning corpus sizes from $8.7$K to $2$M documents. Between $F=16$ and $F=64$, the spread is only $0.34$ NDCG@10, allowing $F$ to be selected according to the available parameter budget.

At $F=8$, the adapter uses $13{,}824$ parameters, $50\times$, $210\times$, and $280\times$ fewer than Learning-to-Select, Matryoshka-Adaptor, and SMEC, respectively, while still leading them by more than $7$ NDCG@10 points. Since $F$ reflections require $FD$ parameters, compared with $\Theta(D^2)$ for a dense map, this advantage grows with encoder width. At $F=64$, AdaWidth is $12\times$, $16\times$, and $64\times$ smaller than a dense map for encoders of width $768$, $1024$, and $4096$, respectively.

\subsection{Ablation Studies}
\label{sec:eval_ablation}

\begin{table}[t]
    \centering
    \caption{Ablation. NDCG@10 averaged over all encoders and tasks.}
    \label{tab:ablation}
    \small
    \setlength{\tabcolsep}{4.5pt}
    \begin{tabular}{@{}lrrrrrr@{}}
        \toprule
        & \multicolumn{5}{c}{Retained coordinates} & \\
        \cmidrule(lr){2-6}
        & $64$ & $96$ & $128$ & $192$ & $256$ & Mean \\
        \midrule
        Prefix truncation      & $.3537$ & $.3870$ & $.4214$ & $.4225$ & $.4290$ & $.4027$ \\
        Matryoshka-Adaptor     & $.5327$ & $.5499$ & $.5603$ & $.5642$ & $.5837$ & $.5582$ \\
        SMEC                   & $.4802$ & $.5016$ & $.5181$ & $.5316$ & $.5343$ & $.5132$ \\
        Learning-to-Select     & $.3778$ & $.4439$ & $.4822$ & $.5292$ & $.5321$ & $.4730$ \\
        \midrule
        Router only            & $.3590$ & $.3926$ & $.4351$ & $.4433$ & $.4490$ & $.4158$ \\
        Adapter only           & $.5811$ & $.6036$ & $.6165$ & $.6278$ & $.6315$ & $.6121$ \\
        AdaWidth (R+A)         & $\mathbf{.5812}$ & $\mathbf{.6066}$ & $\mathbf{.6174}$ & $\mathbf{.6294}$ & $\mathbf{.6317}$ & $\mathbf{.6134}$ \\
        \bottomrule
        \multicolumn{7}{@{}l@{}}{\footnotesize R+A: Router $+$ Adapter.}
    \end{tabular}
\end{table}

Table~\ref{tab:ablation} shows that the adapter provides most of the gain, raising mean NDCG@10 from $0.4027$ under prefix truncation to $0.6121$ and exceeding the strongest baseline by $5.39$ points. The router alone improves prefix truncation by $1.31$ points, while its gain on top of the adapter is modest overall but reaches $0.45$ points on corpora with more than $10^5$ documents.

\section{Conclusion}
We presented AdaWidth, which allocates embedding width per query within a shared prefix representation.  Its orthogonal adapter moves discriminative signal toward leading coordinates, while its router evaluates wider prefixes only when the ranking is likely to change.  Our prefix sufficiency analysis relates the required width to corpus size and retrieval depth.  Across six retrieval tasks and five frozen encoders, AdaWidth matches state-of-the-art NDCG@10 using $55\%$--$84\%$ fewer coordinates per query.

%%
%% The acknowledgments section is defined using the "acks" environment
%% (and NOT an unnumbered section). This ensures the proper
%% identification of the section in the article metadata, and the
%% consistent spelling of the heading.
% \begin{acks}
% To Robert, for the bagels and explaining CMYK and color spaces.
% \end{acks}

%%
%% The next two lines define the bibliography style to be used, and
%% the bibliography file.
\bibliographystyle{ACM-Reference-Format}
\bibliography{Reference}

@String{Computing = "Computing" }

@String{Computer = "{IEEE} Computer" }

@String{Springer = "Springer-Verlag" }

@inproceedings{karpukhin2020dense,
    title = "Dense Passage Retrieval for Open-Domain Question Answering",
    author = "Karpukhin, Vladimir  and
      Oguz, Barlas  and
      Min, Sewon  and
      Lewis, Patrick  and
      Wu, Ledell  and
      Edunov, Sergey  and
      Chen, Danqi  and
      Yih, Wen-tau",
    editor = "Webber, Bonnie  and
      Cohn, Trevor  and
      He, Yulan  and
      Liu, Yang",
    booktitle = "Proceedings of the 2020 Conference on Empirical Methods in Natural Language Processing (EMNLP)",
    month = nov,
    year = "2020",
    address = "Online",
    publisher = "Association for Computational Linguistics",
    url = "https://aclanthology.org/2020.emnlp-main.550/",
    doi = "10.18653/v1/2020.emnlp-main.550",
    pages = "6769--6781"
}

@article{dasgupta2003elementary,
  title={An elementary proof of a theorem of Johnson and Lindenstrauss},
  author={Dasgupta, Sanjoy and Gupta, Anupam},
  journal={Random Structures \& Algorithms},
  volume={22},
  number={1},
  pages={60--65},
  year={2003},
  publisher={Wiley Online Library}
}

@article{li2026spectral,
  title={Spectral Tempering for Embedding Compression in Dense Passage Retrieval},
  author={Li, Yongkang and Eustratiadis, Panagiotis and Kanoulas, Evangelos},
  journal={arXiv preprint arXiv:2603.19339},
  year={2026}
}

@article{kusupati2022matryoshka,
  title={Matryoshka representation learning},
  author={Kusupati, Aditya and Bhatt, Gantavya and Rege, Aniket and Wallingford, Matthew and Sinha, Aditya and Ramanujan, Vivek and Howard-Snyder, William and Chen, Kaifeng and Kakade, Sham and Jain, Prateek and others},
  journal={Advances in Neural Information Processing Systems},
  volume={35},
  pages={30233--30249},
  year={2022}
}

@inproceedings{yoon2024search,
  title={Search-adaptor: Embedding customization for information retrieval},
  author={Yoon, Jinsung and Chen, Yanfei and Arik, Sercan and Pfister, Tomas},
  booktitle={Proceedings of the 62nd Annual Meeting of the Association for Computational Linguistics (Volume 1: Long Papers)},
  pages={12230--12247},
  year={2024}
}

@inproceedings{yoon2024matryoshka,
  title={Matryoshka-adaptor: Unsupervised and supervised tuning for smaller embedding dimensions},
  author={Yoon, Jinsung and Sinha, Rajarishi and Arik, Sercan O and Pfister, Tomas},
  booktitle={Proceedings of the 2024 Conference on Empirical Methods in Natural Language Processing},
  pages={10318--10336},
  year={2024}
}

@inproceedings{zhang2025smec,
  title={SMEC: rethinking matryoshka representation learning for retrieval embedding compression},
  author={Zhang, Biao and Chen, Lixin and Liu, Tong and Zheng, Bo},
  booktitle={Proceedings of the 2025 Conference on Empirical Methods in Natural Language Processing},
  pages={26220--26233},
  year={2025}
}

@article{zhao2026dive,
  title={DIVE: Embedding Compression via Self-Limiting Gradient Updates},
  author={Zhao, Dongfang},
  journal={arXiv preprint arXiv:2605.20689},
  year={2026}
}

@article{rege2023adanns,
  title={Adanns: A framework for adaptive semantic search},
  author={Rege, Aniket and Kusupati, Aditya and Fan, Alan and Cao, Qingqing and Kakade, Sham and Jain, Prateek and Farhadi, Ali and others},
  journal={Advances in Neural Information Processing Systems},
  volume={36},
  pages={76311--76335},
  year={2023}
}

@inproceedings{faggioli2024dime,
  title={Dimension importance estimation for dense information retrieval},
  author={Faggioli, Guglielmo and Ferro, Nicola and Perego, Raffaele and Tonellotto, Nicola},
  booktitle={Proceedings of the 47th International ACM SIGIR Conference on Research and Development in Information Retrieval},
  pages={1318--1328},
  year={2024}
}

@inproceedings{derasmo2024eclipse,
  title={Eclipse: Contrastive dimension importance estimation with pseudo-irrelevance feedback for dense retrieval},
  author={D'Erasmo, Giulio and Trappolini, Giovanni and Silvestri, Fabrizio and Tonellotto, Nicola},
  booktitle={Proceedings of the 2025 International ACM SIGIR Conference on Innovative Concepts and Theories in Information Retrieval (ICTIR)},
  pages={147--154},
  year={2025}
}

@inproceedings{faggioli2025codime,
  title={Codime: a counterfactual approach for dimension importance estimation through click logs},
  author={Faggioli, Guglielmo and Ferro, Nicola and Perego, Raffaele and Tonellotto, Nicola},
  booktitle={Proceedings of the 48th International ACM SIGIR Conference on Research and Development in Information Retrieval},
  pages={2330--2340},
  year={2025}
}

@inproceedings{derasmo2026rdime,
  title={Statistical Foundations of DIME: Risk Estimation for Practical Index Selection},
  author={D'Erasmo, Giulio and Campagnano, Cesare and Mallia, Antonio and Brutti, Pierpaolo and Tonellotto, Nicola and Silvestri, Fabrizio},
  booktitle={Proceedings of the 19th Conference of the European Chapter of the Association for Computational Linguistics (Volume 2: Short Papers)},
  pages={722--730},
  year={2026}
}

@inproceedings{wu2026learning,
  title={Learning to Select: Query-Aware Adaptive Dimension Selection for Dense Retrieval},
  author={Wu, Zhanyu and Zhang, Richong and Nie, Zhijie},
  booktitle={Proceedings of the 64th Annual Meeting of the Association for Computational Linguistics (Volume 1: Long Papers)},
  pages={18666--18677},
  year={2026}
}

@inproceedings{
    thakur2021beir,
    title={{BEIR}: A Heterogeneous Benchmark for Zero-shot Evaluation of Information Retrieval Models},
    author={Nandan Thakur and Nils Reimers and Andreas R{\"u}ckl{\'e} and Abhishek Srivastava and Iryna Gurevych},
    booktitle={Thirty-fifth Conference on Neural Information Processing Systems Datasets and Benchmarks Track (Round 2)},
    year={2021},
    url={https://openreview.net/forum?id=wCu6T5xFjeJ}
}

@inproceedings{e5Mistral,
  title={Improving text embeddings with large language models},
  author={Wang, Liang and Yang, Nan and Huang, Xiaolong and Yang, Linjun and Majumder, Rangan and Wei, Furu},
  booktitle={Proceedings of the 62nd annual meeting of the association for computational linguistics (volume 1: long papers)},
  pages={11897--11916},
  year={2024}
}

@misc{qwen3technicalreport,
      title={Qwen3 Technical Report}, 
      author={Qwen Team},
      year={2025},
      eprint={2505.09388},
      archivePrefix={arXiv},
      primaryClass={cs.CL},
      url={https://arxiv.org/abs/2505.09388}, 
}

@misc{nussbaum2024nomic,
      title={Nomic Embed: Training a Reproducible Long Context Text Embedder}, 
      author={Zach Nussbaum and John X. Morris and Brandon Duderstadt and Andriy Mulyar},
      year={2024},
      eprint={2402.01613},
      archivePrefix={arXiv},
      primaryClass={cs.CL}
}

@misc{bge_embedding,
      title={C-Pack: Packaged Resources To Advance General Chinese Embedding}, 
      author={Shitao Xiao and Zheng Liu and Peitian Zhang and Niklas Muennighoff},
      year={2023},
      eprint={2309.07597},
      archivePrefix={arXiv},
      primaryClass={cs.CL}
}

@article{maia201818,
  title={Www'18 open challenge: financial opinion mining and question answering},
  author={Maia, Macedo and Handschuh, Siegfried and Freitas, Andr{\'e} and Davis, Brian and McDermott, Ross and Zarrouk, Manel and Balahur, Alexandra},
  year={2018},
  publisher={Association for Computing Machinery}
}

@inproceedings{wachsmuth2018retrieval,
  title={Retrieval of the best counterargument without prior topic knowledge},
  author={Wachsmuth, Henning and Syed, Shahbaz and Stein, Benno},
  booktitle={Proceedings of the 56th Annual Meeting of the Association for Computational Linguistics (Volume 1: Long Papers)},
  pages={241--251},
  year={2018}
}

@article{nguyen2016ms,
  title={Ms marco: A human-generated machine reading comprehension dataset},
  author={Nguyen, Tri and Rosenberg, Mir and Song, Xia and Gao, Jianfeng and Tiwary, Saurabh and Majumder, Rangan and Deng, Li},
  year={2016}
}

@inproceedings{marino2019ok,
  title={Ok-vqa: A visual question answering benchmark requiring external knowledge},
  author={Marino, Kenneth and Rastegari, Mohammad and Farhadi, Ali and Mottaghi, Roozbeh},
  booktitle={2019 IEEE/CVF conference on computer vision and pattern recognition (CVPR)},
  pages={3190--3199},
  year={2019},
  organization={IEEE}
}

@inproceedings{schwenk2022okvqa,
  title={A-okvqa: A benchmark for visual question answering using world knowledge},
  author={Schwenk, Dustin and Khandelwal, Apoorv and Clark, Christopher and Marino, Kenneth and Mottaghi, Roozbeh},
  booktitle={European conference on computer vision},
  pages={146--162},
  year={2022},
  organization={Springer}
}

@inproceedings{liu2022dimension,
  title={Dimension reduction for efficient dense retrieval via conditional autoencoder},
  author={Liu, Zhenghao and Zhang, Han and Xiong, Chenyan and Liu, Zhiyuan and Gu, Yu and Li, Xiaohua},
  booktitle={Proceedings of the 2022 Conference on Empirical Methods in Natural Language Processing},
  pages={5692--5698},
  year={2022}
}

@inproceedings{zhang2024evaluating,
  title={Evaluating unsupervised dimensionality reduction methods for pretrained sentence embeddings},
  author={Zhang, Gaifan and Zhou, Yi and Bollegala, Danushka},
  booktitle={Proceedings of the 2024 Joint International Conference on Computational Linguistics, Language Resources and Evaluation (LREC-COLING 2024)},
  pages={6530--6543},
  year={2024}
}

@inproceedings{wang2023dimensionality,
  title={On the dimensionality of sentence embeddings},
  author={Wang, Hongwei and Zhang, Hongming and Yu, Dong},
  booktitle={Findings of the Association for Computational Linguistics: EMNLP 2023},
  pages={10344--10354},
  year={2023}
}

@inproceedings{takeshita2025randomly,
  title={Randomly removing 50\% of dimensions in text embeddings has minimal impact on retrieval and classification tasks},
  author={Takeshita, Sotaro and Takeshita, Yurina and Ruffinelli, Daniel and Ponzetto, Simone Paolo},
  booktitle={Proceedings of the 2025 Conference on Empirical Methods in Natural Language Processing},
  pages={27693--27714},
  year={2025}
}

@inproceedings{caspari2026corect,
  title={CoRECT: A framework for evaluating embedding compression techniques at scale},
  author={Caspari, Laura and Dinzinger, Michael and Dastidar, Kanishka Ghosh and Fellicious, Christofer and Mitrovi{\'c}, Jelena and Granitzer, Michael},
  booktitle={European Conference on Information Retrieval},
  pages={383--398},
  year={2026},
  organization={Springer}
}

@inproceedings{mhammedi2017efficient,
  title={Efficient orthogonal parametrisation of recurrent neural networks using householder reflections},
  author={Mhammedi, Zakaria and Hellicar, Andrew and Rahman, Ashfaqur and Bailey, James},
  booktitle={International Conference on Machine Learning},
  pages={2401--2409},
  year={2017},
  organization={PMLR}
}

@article{schreiber1989storage,
  title={A storage-efficient WY representation for products of Householder transformations},
  author={Schreiber, Robert and Van Loan, Charles},
  journal={SIAM Journal on Scientific and Statistical Computing},
  volume={10},
  number={1},
  pages={53--57},
  year={1989},
  publisher={SIAM}
}

@inproceedings{cronentownsend2002predicting,
  title={Predicting query performance},
  author={Cronen-Townsend, Steve and Zhou, Yun and Croft, W Bruce},
  booktitle={Proceedings of the 25th annual international ACM SIGIR conference on Research and development in information retrieval},
  pages={299--306},
  year={2002}
}

@article{cummins2014document,
  title={Document score distribution models for query performance inference and prediction},
  author={Cummins, Ronan},
  journal={ACM Transactions on Information Systems (TOIS)},
  volume={32},
  number={1},
  pages={1--28},
  year={2014},
  publisher={ACM New York, NY, USA}
}

@inproceedings{faggioli2023query,
  title={Query performance prediction for neural ir: Are we there yet?},
  author={Faggioli, Guglielmo and Formal, Thibault and Marchesin, Stefano and Clinchant, St{\'e}phane and Ferro, Nicola and Piwowarski, Benjamin},
  booktitle={European Conference on Information Retrieval},
  pages={232--248},
  year={2023},
  organization={Springer}
}

@inproceedings{wang2011cascade,
  title={A cascade ranking model for efficient ranked retrieval},
  author={Wang, Lidan and Lin, Jimmy and Metzler, Donald},
  booktitle={Proceedings of the 34th international ACM SIGIR conference on Research and development in Information Retrieval},
  pages={105--114},
  year={2011}
}

@article{deveaud2019learning,
  title={Learning to adaptively rank document retrieval system configurations},
  author={Deveaud, Romain and Mothe, Josiane and Ullah, Md Zia and Nie, Jian-Yun},
  journal={ACM Transactions on Information Systems (TOIS)},
  volume={37},
  number={1},
  pages={1--41},
  year={2018},
  publisher={ACM New York, NY, USA}
}

@inproceedings{ganguly2023query,
  title={Query-specific variable depth pooling via query performance prediction},
  author={Ganguly, Debasis and Yilmaz, Emine},
  booktitle={Proceedings of the 46th International ACM SIGIR Conference on Research and Development in Information Retrieval},
  pages={2303--2307},
  year={2023}
}

@inproceedings{scifact,
  title={Fact or Fiction: Verifying Scientific Claims},
  author={David Wadden and Shanchuan Lin and Kyle Lo and Lucy Lu Wang and Madeleine van Zuylen and Arman Cohan and Hannaneh Hajishirzi},
  booktitle={EMNLP},
  year={2020},
}

@inproceedings{nfcorpus,
  author = {Boteva, Vera and Gholipour, Demian and Sokolov, Artem and Riezler, Stefan},
  title = {A Full-Text Learning to Rank Dataset for Medical Information Retrieval},
  journal = {Proceedings of the 38th European Conference on Information Retrieval},
  journal-abbrev = {ECIR},
  year = {2016},
  city = {Padova},
  country = {Italy},
  url = {http://www.cl.uni-heidelberg.de/~riezler/publications/papers/ECIR2016.pdf}
}

@article{zhuang2024starbucks,
  title={Starbucks-v2: Improved training for 2d matryoshka embeddings},
  author={Zhuang, Shengyao and Wang, Shuai and Zheng, Fabio and Koopman, Bevan and Zuccon, Guido},
  journal={arXiv preprint arXiv:2410.13230},
  year={2024}
}

@article{li20242d,
  title={2d matryoshka sentence embeddings},
  author={Li, Xianming and Li, Zongxi and Li, Jing and Xie, Haoran and Li, Qing},
  journal={arXiv preprint arXiv:2402.14776},
  year={2024}
}

@inproceedings{huang2020embedding,
  title={Embedding-based retrieval in facebook search},
  author={Huang, Jui-Ting and Sharma, Ashish and Sun, Shuying and Xia, Li and Zhang, David and Pronin, Philip and Padmanabhan, Janani and Ottaviano, Giuseppe and Yang, Linjun},
  booktitle={Proceedings of the 26th ACM SIGKDD International Conference on Knowledge Discovery \& Data Mining},
  pages={2553--2561},
  year={2020}
}

@inproceedings{nigam2019semantic,
  title={Semantic product search},
  author={Nigam, Priyanka and Song, Yiwei and Mohan, Vijai and Lakshman, Vihan and Ding, Weitian and Shingavi, Ankit and Teo, Choon Hui and Gu, Hao and Yin, Bing},
  booktitle={Proceedings of the 25th ACM SIGKDD International Conference on Knowledge Discovery \& Data Mining},
  pages={2876--2885},
  year={2019}
}

@inproceedings{li2021embedding,
  title={Embedding-based product retrieval in taobao search},
  author={Li, Sen and Lv, Fuyu and Jin, Taiwei and Lin, Guli and Yang, Keping and Zeng, Xiaoyi and Wu, Xiao-Ming and Ma, Qianli},
  booktitle={Proceedings of the 27th ACM SIGKDD Conference on Knowledge Discovery \& Data Mining},
  pages={3181--3189},
  year={2021}
}

@inproceedings{covington2016deep,
  title={Deep neural networks for youtube recommendations},
  author={Covington, Paul and Adams, Jay and Sargin, Emre},
  booktitle={Proceedings of the 10th ACM conference on recommender systems},
  pages={191--198},
  year={2016}
}

@article{lewis2020retrieval,
  title={Retrieval-augmented generation for knowledge-intensive nlp tasks},
  author={Lewis, Patrick and Perez, Ethan and Piktus, Aleksandra and Petroni, Fabio and Karpukhin, Vladimir and Goyal, Naman and K{\"u}ttler, Heinrich and Lewis, Mike and Yih, Wen-tau and Rockt{\"a}schel, Tim and others},
  journal={Advances in neural information processing systems},
  volume={33},
  pages={9459--9474},
  year={2020}
}

@article{gao2023survey,
  title={Retrieval-augmented generation for large language models: A survey},
  author={Gao, Yunfan and Xiong, Yun and Gao, Xinyu and Jia, Kangxiang and Pan, Jinliu and Bi, Yuxi and Dai, Yi and Sun, Jiawei and Wang, Meng and Wang, Haofen},
  journal={arXiv preprint arXiv:2312.10997},
  year={2023}
}

@article{malkov2018efficient,
  title={Efficient and robust approximate nearest neighbor search using hierarchical navigable small world graphs},
  author={Malkov, Yu A and Yashunin, Dmitry A},
  journal={IEEE transactions on pattern analysis and machine intelligence},
  volume={42},
  number={4},
  pages={824--836},
  year={2018},
  publisher={IEEE}
}

@article{johnson2019billion,
  title={Billion-scale similarity search with GPUs},
  author={Johnson, Jeff and Douze, Matthijs and J{\'e}gou, Herv{\'e}},
  journal={IEEE transactions on big data},
  volume={7},
  number={3},
  pages={535--547},
  year={2019},
  publisher={IEEE}
}

@inproceedings{guo2020accelerating,
  title={Accelerating large-scale inference with anisotropic vector quantization},
  author={Guo, Ruiqi and Sun, Philip and Lindgren, Erik and Geng, Quan and Simcha, David and Chern, Felix and Kumar, Sanjiv},
  booktitle={International Conference on Machine Learning},
  pages={3887--3896},
  year={2020},
  organization={PMLR}
}

@inproceedings{ethayarajh2019contextual,
  title={How contextual are contextualized word representations? Comparing the geometry of BERT, ELMo, and GPT-2 embeddings},
  author={Ethayarajh, Kawin},
  booktitle={Proceedings of the 2019 conference on empirical methods in natural language processing and the 9th international joint conference on natural language processing (EMNLP-IJCNLP)},
  pages={55--65},
  year={2019}
}

@inproceedings{timkey2021rogue,
  title={All bark and no bite: Rogue dimensions in transformer language models obscure representational quality},
  author={Timkey, William and Van Schijndel, Marten},
  booktitle={Proceedings of the 2021 Conference on Empirical Methods in Natural Language Processing},
  pages={4527--4546},
  year={2021}
}

@article{mu2018allbutthetop,
  title={All-but-the-top: Simple and effective postprocessing for word representations},
  author={Mu, Jiaqi and Bhat, Suma and Viswanath, Pramod},
  journal={arXiv preprint arXiv:1702.01417},
  year={2017}
}

@article{kataiwa2025intrinsic,
  title={Measuring intrinsic dimension of token embeddings},
  author={Kataiwa, Takuya and Hakaze, Cho and Ohki, Tetsushi},
  journal={arXiv preprint arXiv:2503.02142},
  year={2025}
}

@inproceedings{inkiriwang2025dimensions,
  title={Do We Really Need All Those Dimensions? An Intrinsic Evaluation Framework for Compressed Embeddings.},
  author={Inkiriwang, Nathan and B{\"o}l{\"u}c{\"u}, Necva and Tarr, Garth and Rybinski, Maciej},
  booktitle={EMNLP (Findings)},
  pages={13305--13323},
  year={2025}
}

@inproceedings{raunak2019effective,
  title={Effective dimensionality reduction for word embeddings},
  author={Raunak, Vikas and Gupta, Vivek and Metze, Florian},
  booktitle={Proceedings of the 4th Workshop on Representation Learning for NLP (RepL4NLP-2019)},
  pages={235--243},
  year={2019}
}

@article{yang2026,
  title={Slipstream: Locality-Aware Graph Index Construction for Streaming Approximate Nearest Neighbor Search},
  author={Yang, Shubing and Zhao, Dongfang},
  journal={arXiv preprint arXiv:2606.02992},
  year={2026}
}

@book{draper1998applied,
  title={Applied regression analysis},
  author={Draper, Norman R and Smith, Harry},
  volume={326},
  year={1998},
  publisher={John Wiley \& Sons}
}

@book{montgomery2022introduction,
  title={Introduction to Linear Regression Analysis, 6e Solutions Manual},
  author={Montgomery, Douglas C and Peck, Elizabeth A and Vining, G Geoffrey},
  year={2022},
  publisher={John Wiley \& Sons}
}

@article{chicco2021coefficient,
  title={The coefficient of determination R-squared is more informative than SMAPE, MAE, MAPE, MSE and RMSE in regression analysis evaluation},
  author={Chicco, Davide and Warrens, Matthijs J and Jurman, Giuseppe},
  journal={Peerj computer science},
  volume={7},
  pages={e623},
  year={2021},
  publisher={PeerJ Inc.}
}

%%
%% If your work has an appendix, this is the place to put it.
% \appendix

% \section{Research Methods}

% \subsection{Part One}

% \subsection{Part Two}

% \section{Online Resources}

\end{document}